\documentclass[sigplan,screen]{acmart}
\AtBeginDocument{%
  }

\setcopyright{acmlicensed}
\copyrightyear{2026}
\acmYear{2026}
\acmDOI{XXXXXXX.XXXXXXX}
\acmConference{}

\usepackage[normalem]{ulem}

\usepackage[inline]{enumitem} 
\RequirePackage{tikz} 
\tikzset{%
  point/.style={fill=#1, inner sep=1.4pt, circle},
  point/.default=black}
\usetikzlibrary{decorations.pathreplacing, calligraphy, arrows.meta}
\RequirePackage{subcaption} 
\RequirePackage{tabularray} 
\UseTblrLibrary{booktabs}                                                       
\newcommand{\RR}{\mathbb{R}}
\newcommand{\Mat}[3]{{#1}^{#2\times #3}}
\newcommand{\MatGr}[2][\relax]{%
\operatorname{GL}_{#2}\ifx#1\relax\relax\else(#1)\fi}

\newcommand{\CV}[2]{{#1}^{#2}}
\newcommand{\ID}[1][\relax]{\mathbf{1}\ifx#1\relax\relax\else_{#1}\fi}
\newcommand{\ZERO}[1][\relax]{\mathbf{0}\ifx#1\relax\relax\else_{#1}\fi}
\newcommand{\Ortho}[2][\relax]{%
  \mathcal{O}_{#2}\ifx#1\relax\relax\else(#1)\fi}
\newcommand{\Perms}[2][\relax]{
  \Pi_{#2}\ifx#1\relax\relax\else(#1)\fi}
\newcommand{\eg}{e.\,g.}
\newcommand{\ie}{i.\,e.}
\newcommand{\wrt}{w.\,r.\,t.}
\DeclareMathOperator{\diag}{diag}
\let\vec\relax
\DeclareMathOperator{\vec}{vec}

\definecolor{ultramarine}{HTML}{5D8CAE} 
\definecolor{whitemouse}{HTML}{B9A193} 
\definecolor{crimson}{HTML}{C91F37} 
\definecolor{johannes}{HTML}{A4345D} 

\usepackage{stfloats}

\AtEndPreamble{%
  \theoremstyle{acmdefinition}
  \newtheorem{remark}[theorem]{Remark}}

\begin{document}

\title{Automatic Generation of Polynomial Symmetry Breaking Constraints}

\author{M\u{a}d\u{a}lina Era\c{s}cu}
\email{madalina.erascu@e-uvt.ro}
\orcid{0000-0002-0435-5883}
\affiliation{%
  \institution{Faculty of Informatics, West University of Timișoara}
  \city{Timișoara}
  \country{Romania}
}

\author{Johannes Middeke}
\email{johannes.middeke@tuj.temple.edu}
\orcid{0009-0007-6716-2433}
\affiliation{%
  \institution{Temple University, Japan Campus}
  \city{Tokyo}
  \country{Japan}}
\renewcommand{\shortauthors}{Era\c{s}cu and Middeke}

\begin{abstract}
Symmetry in integer programming causes redundant search and is often
handled with symmetry breaking constraints that remove as many
equivalent solutions as possible. We propose an algebraic method which
allows to generate a random family of polynomial inequalities which
can be used as symmetry breakers. The method requires as input an
arbitrary base polynomial and a group of permutations which is
specific to the integer program. The computations can be easily
carried out in any major symbolic computation software. In order to
test our approach, we describe a case study on near half-capacity 0-1
bin packing instances which exhibit substantial symmetries. We
statically generate random quadratic breakers and add them to a
baseline integer programming problem which we then solve with
Gurobi. It turns out that simple symmetry breakers, especially
combining few variables and permutations, most consistently reduce
work time.
\end{abstract}

\begin{CCSXML}
<ccs2012>
   <concept>
       <concept_id>10010147.10010148.10010149</concept_id>
       <concept_desc>Computing methodologies~Symbolic and algebraic algorithms</concept_desc>
       <concept_significance>500</concept_significance>
       </concept>
   <concept>
       <concept_id>10010147.10010148.10010149.10010161</concept_id>
       <concept_desc>Computing methodologies~Optimization algorithms</concept_desc>
       <concept_significance>500</concept_significance>
       </concept>
   <concept>
       <concept_id>10002950.10003705.10003707</concept_id>
       <concept_desc>Mathematics of computing~Solvers</concept_desc>
       <concept_significance>300</concept_significance>
       </concept>
 </ccs2012>
\end{CCSXML}

\ccsdesc[500]{Computing methodologies~Symbolic and algebraic algorithms}
\ccsdesc[500]{Computing methodologies~Optimization algorithms}
\ccsdesc[300]{Mathematics of computing~Solvers}

\keywords{algebraic symmetry breaking,
integer programming, permutation groups, bin packing}

\maketitle

\section{Introduction}\label{sec:introduction}
While symbolic computation exploits the symmetry of the problem for solving it \cite{hubert2024preserving},
optimization and automated reasoning (constraint programming~\cite{GentPetriePuget2006SymmetryCP,Walsh2012SymmetryBreakingConstraints}, mixed-integer programming~\cite{Margot2002PruningIsomorphism,kaibel2008packing,salvagnin2018symmetry}, satisfiability~\cite{AloulMarkovSakallahDAC2003Shatter,CrawfordEtAlKR1996}, and satisfiability modulo theory~\cite{DingliwalEtAlArXiv2019SMTSymBreak}) typically view the same symmetries as redundant branches of the search tree that must be pruned by explicit symmetry breaking constraints. In order to overcome this issue, a common and effective method to deal with symmetry is
to add constraints which eliminate as many symmetric solutions as possible. These are the so-called symmetry breaking constraints which can be exploited either statically~\cite{puget1993satisfiability,DingliwalEtAlArXiv2019SMTSymBreak} or dynamically~\cite{backofen1999excluding,10.5555/3006433.3006560,fahle2001symmetry}. 

Our work concerns \emph{static symmetry breaking}, i.e., adding constraints \emph{a priori} to restrict feasibility to orbit representatives. This is a mature topic: in constraint programming~(CP), lex-leader and related ordering constraints are standard~\cite{Walsh2012SymmetryBreakingConstraints,GentPetriePuget2006SymmetryCP}. In mixed-integer programming (MIP), the most structured static approaches are polyhedral: for assignment-matrix formulations with interchangeable rows/columns, orbitopes encode the convex hull of lex-sorted $0$-$1$ matrices and yield strong linear descriptions and compact extended formulations~\cite{kaibel2008packing,faenza2009extended}; more generally, static inequalities can also be derived from computational group theory (stabilizer chains, Schreier-Sims tables) to break symmetry in arbitrary formulations~\cite{LibertiOstrowski2014StabilizerBased,salvagnin2018symmetry}.
In satisfiability modulo theory (SMT), symmetry breaking is typically imposed on the Boolean abstraction~\cite{DingliwalEtAlArXiv2019SMTSymBreak}.

While the existing research on symmetry breaking has concentrated on
linear inequalities, in this work we are exploring the computational effectiveness of non-linear symmetry breakers. Starting
with a base polynomial in the problem variables, we construct
inequalities using the symmetry group of the problem. In more detail, given $h(x)$, a polynomial, and $P$, a permutation of variables in the
symmetry group, we show that the inequality
$h(P x) - h(x) \leq 0$ can act as a static symmetry breaker (see 
~\autoref{thm:MainResult}). Note that
for the special case that $h(x)$ is a linear polynomial, our approach
will generate symmetry breakers whose are  similar to those found by existing approaches such as
\cite{50Years}. However, our
method is not restricted to the linear case.

As a concrete case study, apply our polynomial symmetry breaking method on the classical 0-1 bin packing model and focus on near half-capacity instances, where item sizes are concentrated around $B/2$, where $B$ is the bin size. This is motivated by temporal bin packing with two jobs per bin (TBP2) in VM-to-server assignment and is computationally challenging due to many equivalent pairings and bin-label permutations, which leads to a high degree of symmetries~\cite{muir2024temporal}.

To evaluate the proposed symmetry breaking method, we generate
synthetic near half-capacity bin packing benchmark
families. We then create a family of random
  polynomial symmetry breaking constraints for each benchmark by
  generating a base polynomial from a selection of templates and
  applying a random set of variable permutations from the symmetry
  group of the benchmark.


Experiments were conducted with Gurobi 12.0.3 \cite{gurobi} on an
Apple M2 Pro system, measuring performance via Gurobi work units
(deterministic computational effort) under a limit of $1800$ work
units.  The results revealed that quadratic symmetry breakers
consistently outperform linear breakers and, importantly, also surpass
over Gurobi’s built-in symmetry breaking. The most reliable gains
come from small-scale quadratic breaker sets (few variables, few
permutations), with the $xy$ template performing best overall. In
contrast, purely linear breakers often increase solver effort, and
larger breaker sets show more variable behavior due to the overhead of
additional non-linear constraints.

\smallskip

The contributions of this paper are as follows:
\begin{itemize}
    \item We propose a symmetry breaking method that generates constraints of the form $h(Px)-h(x)\!\le\!0$ for group permutations $P$ and a chosen base polynomial~$h$. Unlike standard CP/MIP approaches based on linear ordering or polyhedral constraints, our construction naturally yields non-linear symmetry breakers.
    \item Our method breaks symmetry by comparing polynomial evaluations under group permutations; the resulting quadratic templates introduce non-linear interactions rather than re-encoding existing approaches into a symbolic computation framework. Hence, our approach is not incremental, but novel. 
    \item Our method is the first method generating non-linear symmetry breakers automatically---provided that (a~subgroup of) the symmetry group is known. Moreover, the experimental results show empirical evidence that quadratic breakers are efficient. 
    \item Our method is particularly valuable for problems where symmetry breakers are not known a priori\footnote{It is well-known that bin packing admits strong, well-studied and computationally effective linear symmetry breakers (e.g., lex-based representatives).}, motivating future evaluations beyond bin packing.
\end{itemize}

\smallskip

The paper is structured as follows. \autoref{sec:MainResult} presents the main symmetry breaking result for integer programs. \autoref{sec:CaseStudy} instantiates it in the bin packing model (benchmarks, symmetry subgroup, and symmetry breaker generation). \autoref{sec:ExperimentalResults} reports the experimental evaluation. \autoref{sec:ConclusionsFutureWork} concludes the paper and outlines future work. 

\section{Main Result}\label{sec:MainResult}

\begin{figure*}[t!]
  \centering
  \begin{subfigure}{0.3\textwidth}
    \centering
    \begin{tikzpicture}
      \definecolor{fillColor}{RGB}{255,255,255}
      \definecolor{brewerGreen}{RGB}{27,158,119}
      \definecolor{brewerOrange}{RGB}{217,95,2}
      \definecolor{brewerPurple}{RGB}{117,112,179}
      \definecolor{regionColour}{named}{brewerGreen}
      \definecolor{pointColour}{named}{brewerPurple}
      \fill[regionColour!15] (-1,-1) -- (1,1) -- (3,-1) -- cycle;
      \fill[regionColour!15] (-1,3) -- (1,1) -- (3,3) -- cycle;
      \draw[regionColour] (3,-1) -- (-1,3);
      \draw[regionColour] (-1,-1) -- (3,3);
      \draw[-Latex] (-1,0) -- (3,0) node[right] {$x$};
      \draw[-Latex] (0,-1) -- (0,3) node[left] {$y$};
      \node[point=pointColour,
      label=below:{\color{pointColour}$(x_1,y_1)$}] at (1,0) {};
      \node[point=pointColour,
      label=left:{\color{pointColour}$(x_2,y_2)$}] at (0,1) {};
    \end{tikzpicture}
  \end{subfigure}
  \hfill
  \begin{subfigure}{0.3\textwidth}
    \centering
    \begin{tikzpicture}
      \definecolor{fillColor}{RGB}{255,255,255}
      \definecolor{brewerGreen}{RGB}{27,158,119}
      \definecolor{brewerOrange}{RGB}{217,95,2}
      \definecolor{brewerPurple}{RGB}{117,112,179}
      \definecolor{regionColour}{named}{brewerGreen}
      \definecolor{pointColour}{named}{brewerPurple}
      \begin{scope}
        \clip (-1,-1) rectangle (3,3);
        \fill[regionColour!15] (-1,-1) -- (3,3) |- cycle;
        \fill[white, rotate=45] (0,0) ellipse (1.412 and 3);
        \clip (-1,-1) -- (3,3) -| cycle;
        \fill[regionColour!15, rotate=45] (0,0) ellipse (1.412 and 3);
      \end{scope}
      \begin{scope}
        \clip (-1,-1) rectangle (3,3);
        \draw[regionColour, rotate=45] (0,0) ellipse (1.412 and 3);
        \draw[regionColour] (-1,-1) -- (3,3);
      \end{scope}
      \draw[-Latex] (-1,0) -- (3,0) node[right] {$x$};
      \draw[-Latex] (0,-1) -- (0,3) node[left] {$y$};
      \node[point=pointColour,
      label=below:{\color{pointColour}$(x_1,y_1)$}] at (1,0) {};
      \node[point=pointColour,
      label=left:{\color{pointColour}$(x_2,y_2)$}] at (0,1) {};
    \end{tikzpicture}    
  \end{subfigure}
  \hfill
  \begin{subfigure}{0.3\textwidth}
    \centering
    \begin{tikzpicture}
      \definecolor{fillColor}{RGB}{255,255,255}
      \definecolor{brewerGreen}{RGB}{27,158,119}
      \definecolor{brewerOrange}{RGB}{217,95,2}
      \definecolor{brewerPurple}{RGB}{117,112,179}
      \definecolor{regionColour}{named}{brewerGreen}
      \definecolor{pointColour}{named}{brewerPurple}
      \begin{scope}
        \clip (-1,-1) rectangle (3,3);
        \fill[regionColour!15] (-1,-1) -- (3,3) |- cycle;
      \end{scope}
      \begin{scope}
        \clip (-1,-1) rectangle (3,3);
        \draw[regionColour] (-1,-1) -- (3,3);
      \end{scope}
      \draw[-Latex] (-1,0) -- (3,0) node[right] {$x$};
      \draw[-Latex] (0,-1) -- (0,3) node[left] {$y$};
      \node[point=pointColour, 
      label=below:{\color{pointColour}$(x_1,y_1)$}] at (1,0) {};
      \node[point=pointColour, 
      label=left:{\color{pointColour}$(x_2,y_2)$}] at (0,1) {};
    \end{tikzpicture}    
  \end{subfigure}
  \caption{The regions $2 y + x^2 - 2 x - y^2 \leq 0$ (left) and
    $y^3 - 3 y + 3 x - x^3 \leq 0$ (centre) and $y - x \leq 0$ (right)
    are shown in a darker shade. In each case, one optimal solution of
    the example problem is inside the region while the other one is
    not.}
  \label{fig:geomshapesbs}
\end{figure*}
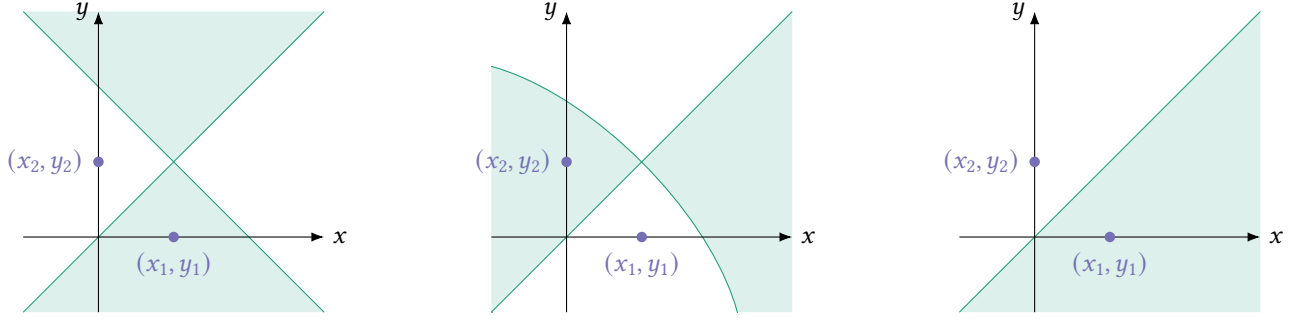

We consider integer programming problems of the form
\begin{equation}
  \label{eq:IP}\tag{IP}
    \begin{aligned}
    \text{minimise}\quad & f(x) \\
    \text{subject to}\quad & A x \geq b, \\
    & x \in U^n
  \end{aligned}
\end{equation}
where
\begin{enumerate*}[afterlabel={},label={},itemjoin={\space}]
\item $f \colon \CV{\RR}{n} \to \RR$ is a polynomial function,
\item $A \in \Mat{\RR}{m}{n}$ is a matrix,
\item $b \in \CV{\RR}m$ is a vector, 
\item $x$ is a vector of decision variables, and
\item $U \subset \RR$ is a finite set.
\end{enumerate*}
See, \eg,~\cite{50Years}.

Integer programs are typically solved with tree-search methods such as branch-and-bound, which explore a search tree by branching on variable assignments and pruning subproblems using bounds. If the model has symmetries, then many branches correspond to permuted, equivalent solutions, so the solver may repeatedly explore essentially identical subtrees. This redundancy can greatly increase the search effort, motivating symmetry breaking constraints that restrict the solver to one representative per symmetry orbit.

\smallskip

Since there are many subtly different definitions of symmetries in the
literature (see, \eg, \cite{GentPetriePuget2006SymmetryCP} for an
overview), it seems prudent to clarify the exact meaning which is used
in this paper. Just as the problem description~\eqref{eq:IP}, it is
based on the definition given in \cite{50Years}.

\begin{definition}[Symmetry]
  An $n\times n$ permutation matrix $P$ is called a \emph{symmetry}
  of~\eqref{eq:IP} if we have $f(P x) = f(x)$ and
  $A (P x) \geq b \iff A x \geq b$ for all $x\in U^n$. The set of all
  symmetries for a given problem forms a group under multiplication
  which we refer to as the \emph{symmetry group} of~\eqref{eq:IP}.
\end{definition}

We will denote the symmetry group of~\eqref{eq:IP} by $G$. Moreover,
we will write $\Perms{n}$ for the set of all $n\times n$ permutation
matrices.

If $x^* \in U^n$ is an optimal solution of~\eqref{eq:IP} and if
$P \in G$ is a symmetry of~\eqref{eq:IP}, then also $P x^*$ is an
optimal solution. This introduces a redundancy in the search. Hence,
one will commonly try to avoid generating solutions which are
connected to other solutions by a symmetry.

\begin{definition}[Symmetry Breaker]
  A \emph{symmetry breaker} is a predicate
  $\psi\colon U^n \to \{\mathit{true}, \mathit{false}\}$ in the
  decision variables such that there exist an optimal solution
  $x^* \in U^n$ and a symmetry $P \in G$ with
  $\psi(x^*) = \mathit{true}$ and $\psi(P x^*) = \mathit{false}$.
\end{definition}

In static symmetry breaking, symmetry breakers are added a priori to the problem~\eqref{eq:IP}. More formally, given a symmetry breaker
$\psi(x)$, we obtain a new problem
\begin{equation}
  \label{eq:IP'}\tag{IP'}
  \begin{aligned}
    \text{minimise}\quad & f(x) \\
    \text{subject to}\quad & A x \geq 0, x \in U^n, \psi(x).
  \end{aligned}
\end{equation}
Since $\psi(x)$ is fulfilled for at least one optimal solution
of~\eqref{eq:IP}, solving~\eqref{eq:IP'} is sufficient for finding a
solution for~\eqref{eq:IP}.

It is possible to add more than one symmetry breaker to the
problem. However, in that case care must be taken that the symmetry
breakers are compatible with each other, \ie, we must ensure that
there is at least one optimal solution which is fulfilled by all the
added symmetry breakers simultaneously. Otherwise, the symmetry
breakers would render the problem infeasible.
\medskip

While symmetry breakers could in principle take any form, in this work
we will concentrate on polynomial inequalities. Consider the
polynomial ring $R = \RR[x]$ where $x~=~(x_1, \ldots, x_n)$ is a
vector of~$n$ variables. For a polynomial $h(x) \in R$ and a
permutation $P \in G$, we will write $h(P x)$ for the polynomial which
arises from $h(x)$ by substitution of all the variables according to
$P$. More precisely, if $P$ maps the $i$-th unit vectors $e_i$ to the
$j$-th unit vector $e_j$, then we subsitute $x_j$ for $x_i$ in
$h(P x)$.

\begin{theorem}\label{thm:MainResult}
  Let $G$ be the symmetry group of~\eqref{eq:IP} and let $R = \RR[x]$
  where $x = (x_1, \ldots, x_n)$. Then, for any $h(x) \in R$ and any
  $P \in G$, there is at least one optimal solution $y^* \in U^n$
  which fulfils the inequality
  \begin{equation*}
    h(P x) - h(x) \leq 0.
  \end{equation*}
  That means, $h(P x) - h(x) \leq 0$ can be added to the
  problem~\eqref{eq:IP} as a symmetry breaker without losing the
  ability to find an optimal solution.
\end{theorem}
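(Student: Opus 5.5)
Fix an optimal solution $x^*\in U^n$ of~\eqref{eq:IP}; one exists because $U^n$ is finite (assuming the problem is feasible). Since $G$ is a group, the whole orbit $\{Q x^* : Q\in G\}$ consists of optimal solutions. Indeed, for $Q\in G$ we have $f(Qx^*)=f(x^*)$, and $A(Qx^*)\ge b$ holds because $Ax^*\ge b$. So it suffices to find some power $y^* = P^k x^*$ with $h(Py^*)\le h(y^*)$. The idea is to choose, along the cyclic orbit generated by $P$, a point where $h$ is maximal.

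Concretely, $P$ is a permutation matrix and therefore has finite order $r$, i.e.\ $P^r=\ID$. Consider the finite sequence $x^*, Px^*, \dots, P^{r-1}x^*$, all of which are optimal by the remark above. Pick $k\in\{0,\dots,r-1\}$ maximising $h(P^k x^*)$ and set $y^* = P^k x^*$. Then $Py^* = P^{k+1}x^*$ is again in the list, because when $k=r-1$ we get $P^{r}x^* = x^*$. By maximality, $h(Py^*)\le h(y^*)$, which is exactly the required inequality $h(Px)-h(x)\le 0$ evaluated at $x=y^*$.

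The one point needing care is the convention for $h(Px)$ fixed in the text before the theorem. Evaluating the polynomial $h(Px)$ at $x=y^*$ must give $h$ applied to the vector $Py^*$, possibly with $P$ replaced by $P^{-1}=P^{T}$ depending on how the substitution is read. I would check this against the stated rule ($e_i\mapsto e_j$ means substituting $x_j$ for $x_i$). If it actually corresponds to $P^{-1}$, the argument is unchanged, since $P^{-1}\in G$ generates the same cyclic orbit. This bookkeeping is the only real obstacle; the rest is the finite-orbit maximum argument. Finally, the ``symmetry breaker'' claim is only that adding the constraint keeps at least one optimal solution, namely $y^*$, and that is what we have shown.
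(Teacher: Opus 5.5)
Your proof is correct and rests on the same idea as the paper's: the orbit of an optimal solution consists of optimal solutions, it is finite, so you can pick a point in it where $h$ is maximal. The difference is which orbit you maximise over.

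You use only the cyclic orbit $\{P^k x^* : 0\le k<r\}$, so your $y^*$ depends on the chosen $P$. The paper maximises $h$ over the full orbit $G x^*$. This gives a single optimal $y^*$ with $h(Q y^*)\le h(y^*)$ for \emph{every} $Q\in G$ simultaneously.

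That uniformity is what the subsequent remark relies on. It lets the paper add a whole family $h(P_1x)-h(x)\le 0,\ldots,h(P_\ell x)-h(x)\le 0$ at once while keeping an optimal solution. Your version proves the theorem as stated for one $P$. On its own, however, it does not show that breakers from different permutations are compatible. For that you would have to maximise over the orbit of the group generated by $P_1,\ldots,P_\ell$, which is the paper's argument.

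The full-orbit version also makes your check of whether the substitution convention means $P$ or $P^{-1}$ unnecessary. Since $G$ is closed under inverses, the inequality holds for all its elements anyway. Your handling of that point via $\langle P\rangle=\langle P^{-1}\rangle$ is nonetheless fine.

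What your route buys is locality: it needs only that the single matrix $P$ is a symmetry, with no knowledge of the rest of $G$. The paper's argument applied to the subgroup $\langle P\rangle$ specialises to exactly yours.
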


\begin{proof}
  Assume that $x^* \in U^n$ is an optimal solution
  of~\eqref{eq:IP}. Then any element of the orbit
  $G x^* = \{ P x^* \mid P \in G\}$ is also an optimal solution.
  Moreover, $G x^*$ is finite and so it the set
  $H = \{h(y^*) \mid y^* \in G x^* \} \subseteq \RR$. Thus, $H$ has a
  maximum. Let $y^* \in G x^*$ be such that $h(y^*) = \max H$. Then,
  we have
  \begin{equation*}
    \forall P \in G\colon
    h(P y^*) \leq h(y^*).
    \qedhere
  \end{equation*}  
\end{proof}

\begin{remark}
  For any fixed polynomial $h \in R$, \autoref{thm:MainResult} allows
  us to use as many permutations $P_1, \ldots, P_\ell \in G$ as we
  like in order to generate an entire family of symmetry breakers
  \begin{equation*}
    h(P_1 x) - h(x) \leq 0, \quad
    \ldots, \quad
    h(P_\ell x) - h(x) \leq 0.
  \end{equation*} 
  Moreover, adding all breakers from the same family to the problem
  simultaneously will not impede our ability to find an optimal
  solution because the theorem ensures that there is at least one
  solution which fulfils all of the breakers.
\end{remark}

It is important to note that the choice of the base polynomial
$h \in R$ has huge effect on the effectiveness of the breakers. In the
extreme case, if $h$ is invariant under the group $G$, then the
breakers will just be the trivial inequality $0 \leq 0$. It is also
important to note that $G$ will be very large in general. Hence, while
in theory we compute all possible breakers for any given base
polynomial, in practise this is unfeasible.

\medskip

\begin{example}
  To illustrate \autoref{thm:MainResult}, consider the two variable
  problem
  \begin{equation*}
    \begin{aligned}
      \text{minimise}\quad& x + y \\
      \text{subject to}\quad& x + y \geq 1, \\
      & x, y \in \{0,1\}.
    \end{aligned}
  \end{equation*}
  For this problem, the permutation group is
  $G\!\!=\!\{\ID, M_{(1\,2)}\!\}\!\subseteq\!\Perms{2}$ where $M_{(1\,2)}$ is
  the permutation matrix for the transposition $(1\,2) \in S_2$. We
  randomly choose the base polynomial $h~=~2 x + y^2 \in
  \RR[x,y]$. This yields the symmetry breaker
  \begin{multline*}
    h(M_{(1\,2)}(x,y)^t) - h(x,y)
    = h(y,x) - h(x,y)
    \\
    = 2 y + x^2 - 2 x - y^2
    \leq 0.
  \end{multline*}
  The two optimal solutions for the example problem are
  $(x_1,y_1) = (1,0)$ and $(x_2,y_2) = (0,1)$. We can check that
  $(x_1,y_1)$ fulfils the inequality $2 y + x^2 - 2 x - y^2 \leq 0$
  while $(x_2,y_2)$ does not (see \autoref{fig:geomshapesbs}).

  As another base polynomial, consider $h = x^3 - 3 x$. For this case,
  the inequality becomes
  \begin{equation*}
    y^3 - 3 y + 3 x - x^3 \leq 0.
  \end{equation*}
  Again, we can check that one of the solutions, more precisely
  $(x_2,y_2)=(0,1)$, fulfills this new inequality while the other one does not (see also \autoref{fig:geomshapesbs}).

  Finally, consider the linear base polynomial $h = 2 x + y$. Here, the
  inequality is
  \begin{equation*}
    2 y + x - 2 x - y = y - x \leq 0.
  \end{equation*}
  (Once more, \autoref{fig:geomshapesbs} illustrates the region where the
  inequality is fulfilled.)
\end{example}

\begin{remark}
  If we start with a non-zero linear base polynomial
  $h(x) = v_1 x_1 + \ldots + v_n x_n$, then our method will produce
  linear symmetry breakers. In fact, it is not difficult to show that
  we will obtain exactly the same linear symmetry breakers as ones
  obtained for the vector $\bar{x} = (v_1, \ldots, v_n)$ in
  ~\cite[Thm.~17.3]{50Years}. However, our method is more general since
  it is not restricted to linear polynomials only. In particular, as~\autoref{fig:geomshapesbs} shows, we obtain a much larger variety of
  different geometric shapes if we allow non-linear symmetry breakers.
\end{remark}

The proof which is given for the correctness of the linear symmetry
breakers in \cite{50Years} relies on the concept of fundamental
regions as introduced in \cite{GroveBenson1985}.  Even though the
proof of our \autoref{thm:MainResult} is much simpler and does not
need this extra machinery, it is possible to show that our symmetry
breakers are also capable of defining fundamental regions. We repeat
the definition here for the convenience of the reader.

\begin{definition}[{\cite[Chpt.~3]{GroveBenson1985}}]
  A set $F \subseteq \CV{\RR}{n}$ is a \emph{fundamental region} \wrt\
  a finite subgroup $G \leq \Ortho{n}$ if
  \begin{enumerate}
  \item $F$ is open,
  \item $P F \cap F = \emptyset$ for all
    $P \in G \setminus \{\ID[n]\}$,
  \item $\bigcup_{P \in G} \overline{P F} = \CV{\RR}{n}$.
  \end{enumerate}
\end{definition}

For the theorem we restrict ourselves to the case of groups of
permutation matrices. However, this restriction is not essential for this result.

\begin{theorem}\label{thm:FundamentalRegion}
  Let $G \leq \Perms{n}$ be a finite group of permutations. Let
  $h \in \RR[x]$. For each $P \in G$, define
  \begin{equation*}
    L_P = \{ x \in \CV{\RR}{n} \colon h(P x) - h(x) < 0 \}.
  \end{equation*}
  If $L_P \neq \emptyset$ for each $P \in G$ with $P \neq \ID[n]$,
  then $F = \bigcap_{P \in G \setminus \{\ID\}} L_P$ is a fundamental
  region \wrt~$G$.
\end{theorem}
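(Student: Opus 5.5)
The plan is to verify the three conditions of the definition of a fundamental region one by one. Two facts do most of the work. First, $F$ is exactly the set of points $x$ that are the \emph{unique strict} maximiser of $h$ over their orbit, in the sense that $h(Qx) < h(x)$ for all $Q \in G\setminus\{\ID\}$. Second, the hypothesis $L_P \neq \emptyset$ only serves to ensure that each polynomial $p_P(x) = h(Px) - h(x)$ with $P \neq \ID$ is not the zero polynomial. Note also that permutation matrices are orthogonal, so $G \leq \Ortho{n}$ and the definition applies.

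\emph{Openness and disjointness.} Each $p_P$ is a polynomial, hence continuous, so $L_P = p_P^{-1}((-\infty,0))$ is open. Since $G$ is finite, $F$ is a finite intersection of open sets and is therefore open. For disjointness, let $P \neq \ID$ and suppose $y = Px \in F$ with $x \in F$.
\begin{itemize}
\item From $x \in F$ we get $h(Px) < h(x)$.
\item From $y \in F$, applied to the non-identity element $P^{-1}$, we get $h(x) = h(P^{-1}y) < h(y) = h(Px)$.
\end{itemize}
These two inequalities contradict each other, so $PF \cap F = \emptyset$.

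\emph{Covering; this is the main step.} Consider the set
\begin{equation*}
  Z = \bigcup_{S \neq T \in G} \{ x \in \CV{\RR}{n} \colon h(Sx) - h(Tx) = 0 \}.
\end{equation*}
Each polynomial $h(Sx) - h(Tx)$ equals $p_{ST^{-1}}(Tx)$. This is the nonzero polynomial $p_{ST^{-1}}$ composed with the invertible linear map $T$, so it is nonzero. The zero set of a nonzero real polynomial is closed with empty interior. Hence $Z$, a finite union of such sets, is closed and nowhere dense, and its complement is dense.

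Now take any $x \notin Z$. Then the values $h(Sx)$ for $S \in G$ are pairwise distinct, so there is a unique $Q \in G$ at which $h(Qx)$ is maximal. For every $R \neq \ID$ we have $RQ \neq Q$, hence $h(RQx) < h(Qx)$, that is, $Qx \in L_R$. Therefore $Qx \in F$ and $x \in Q^{-1}F$. This gives $\CV{\RR}{n} \setminus Z \subseteq \bigcup_{P\in G} PF$. Taking closures and using that a finite union of closures is the closure of the union, we obtain
\begin{equation*}
  \CV{\RR}{n} = \overline{\CV{\RR}{n}\setminus Z} \subseteq \bigcup_{P\in G}\overline{PF}.
\end{equation*}

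The only delicate points are the following. One has to choose $Z$ through all pairs $S \neq T$ rather than only through the sets $\{p_R = 0\}$. This makes the exceptional set account for every tie in the orbit, including points fixed by some $R \neq \ID$ (there $Sx = Tx$ forces $h(Sx) = h(Tx)$). One also has to recall the standard fact that a nonzero polynomial cannot vanish on an open set.
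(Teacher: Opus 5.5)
Your proof is correct. Openness and disjointness are essentially the paper's argument: the paper first rewrites $PF$ as $\{y\colon h(Uy)<h(P^{-1}y)\ \forall U\neq P^{-1}\}$, while you reach the same contradiction directly from $y=Px$.

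The covering step takes a genuinely different route. The paper asserts $\overline{PF}=\{y\colon h(Uy)\le h(P^{-1}y)\ \forall U\neq P^{-1}\}$. Then, for arbitrary $x$, it takes any $P$ with $h(P^{-1}x)$ maximal on the orbit. You instead remove the tie locus $Z$, use the unique strict maximiser on its dense complement, and only then pass to closures.

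Your route is the more rigorous one. It is where the hypothesis $L_P\neq\emptyset$ is actually used, namely to make $Z$ nowhere dense; the paper's proof never invokes that hypothesis. The paper's closure identity is automatic only as ``$\subseteq$'', and the needed ``$\supseteq$'' can fail at tie points. For example, take $n=2$, $G=\{\ID,\sigma\}$ with $\sigma$ the swap, and $h=x_1(x_1+x_2)^2$. Then $F=\{x_2<x_1,\ x_1+x_2\neq 0\}$, so $y=(-1,1)\notin\overline{F}$ even though $h(\sigma y)=h(y)$. The theorem still holds there because $y\in\overline{\sigma F}$. So one must pick the right maximiser among the tied ones, and your density argument handles this automatically.

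The only extra ingredient you need is the standard fact, which you state, that a nonzero real polynomial cannot vanish on a nonempty open set.
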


The proof is inspired by the one of
\cite[Thm.~3.1.2]{GroveBenson1985}. However, while
\cite{GroveBenson1985} makes a heavy use of linear algebra concepts, in particular norms and orthogonal maps, our proof is not restricted to the linear case.

\begin{proof}[Proof of \autoref{thm:FundamentalRegion}]
  We first note that $L_P$ is open since $x \mapsto h(P x) - h(x)$ is
  a continuous function. This implies that $F$ is open since it is the
  intersection of finitely many open sets.

  Note that
  \begin{math}
    F = \{x\in \CV{\RR}{n}\colon h(Q x) < h(x) \;\forall Q \in G, Q \neq \ID
    \}.
  \end{math}
  Thus, we have
  \begin{multline*}
    P F
    = P (\{x \colon h(Q x) < h(x) \;\forall Q \neq \ID \})
    \\
    = \{ P x \colon h(Q x) < h(x) \;\forall Q \neq \ID \}
    \\
    = \{ y = P x \colon h(Q P^{-1} y) < h(P^{-1} y) \;\forall Q \neq \ID \}
    \\
    = \{ y \colon h(U y) < h(P^{-1} y) \;\forall U \neq P^{-1} \} 
  \end{multline*}
  since $U = Q P^{-1}$ will range over all permutations except for
  $P^{-1}$ as $Q$ ranges over the permutations in
  $G \setminus \{\ID\}$.  Consequently, if we had $x \in F \cap P F$,
  then we obtained the contradiction
  \begin{math}
    h(x) < h(P^{-1} x) < h(x)
  \end{math}
  where the first inequality is from $x \in P F$ with $U = \ID$ and
  the second is from $x \in F$. Thus, $F \cap P F = \emptyset$.

  Finally, 
  \begin{math}
    \overline{P F}
    = \{y \colon h(U y) \leq h(P^{-1} y) \;\forall U \neq P^{-1}\}
  \end{math}
  for every $P \in G$. Let $x \in \CV{\RR}{n}$ be arbitrary. Since $G$ is
  finite, so is $M = \{ P x \colon P \in G \}$. Hence, there is a
  $P \in G$ such that $h(P^{-1} x) = \max M$. Then
  \begin{math}
    x \in \overline{P F} \subseteq \bigcup_{Q \in G} \overline{Q F}.
  \end{math}
\end{proof}

\begin{remark}
  By \cite[Thm.~3.1.2]{GroveBenson1985}, there always exists a linear
  polynomial $h = a_1 x_1 + \ldots + a_n x_n \in \RR[x]$ with
  $a_1, \ldots, a_n \in \RR$ which fulfils the condition
  $L_P \neq \emptyset$ for all $P \in G \setminus \{\ID[n]\}$
  in~\autoref{thm:FundamentalRegion}.
\end{remark}

\section{Case Study}\label{sec:CaseStudy}
\subsection{Problem Description (Bin Packing Model)} 
The bin packing problem \cite[Chapter~8]{martello1990knapsack} is given by
\begin{equation}
  \label{eq:BP}\tag{BP}
  \begin{aligned}
    \text{minimise}\quad
    &
    y_1 + \ldots + y_n \\
    \text{subject to}\quad
    &
    \forall k=1,\ldots,n\colon
    s_1 x_{1k} + \ldots + s_m x_{mk} \leq B y_k \\
    &
    \forall i=1,\ldots,m\colon
    x_{i1} + \ldots + x_{in} = 1 \\
    &
    \forall i = 1,\ldots,m\; 
    \forall k = 1,\ldots,n\colon
    x_{i,k} \in \{0,1\} \\
    &
    \forall k = 1,\ldots,n\colon
    y_k \in \{0,1\} \\
  \end{aligned}
\end{equation}
where $m, n$ are positive integers and 
$B$, $s_1, \ldots, s_m$, are positive real numbers.
We will refer to
$s_1, \ldots, s_m$ as the (object) \emph{sizes} of~\eqref{eq:BP} and
to $B$ as the \emph{bin size}.
\subsection{Benchmark Generation (Near Half-Capacity Families)} Inspired by \cite{muir2024temporal}, we evaluate our method on a static bin-packing benchmark of near half-capacity instances, where item sizes are concentrated around $B/2$ (with $B$ the bin capacity). This setting is practically motivated by temporal bin packing with two jobs per bin (TBP2), which arises in VM-to-server assignment for specialized resource-intensive cloud systems. Computationally, such instances are challenging as they exhibit many equivalent item pairings, arbitrary bin-label permutations, and almost all objects are undistinguishable.

We construct a collection of synthetic instance families for the bin-packing problem~\eqref{eq:BP}. For each family, we report the bin capacity $B$, the number of items $n$, and the item-size specification (here, a near half-capacity regime).\autoref{tab:near-half-capacity} summarizes the settings selected to yield a runtime for the baseline problems of approximately $5$ minutes. The generated benchmarks have around 4 million variables and around 4000 constraints.
\begin{table}[h!]
\centering
\caption{Near half-capacity experimental settings}
\label{tab:near-half-capacity}
\small
\centering
\begin{tabular}{lcccc}
\hline
Class & 3 & 5 & 7 & 9 \\
\hline
$n$ & 2000 & 2000 & 1024 & 1000 \\
intervals & [49,51] & [48,52] & [47,53] & [46,54] \\
\hline
\end{tabular}
\end{table}


\subsection{Symmetry Group}

For the symmetry breakers, we will first need to determine (a large
enough subgroup of) the symmetry group for the bin packing
problem.

We will write $e_1 = (1,0,\ldots,0), e_2 = (0,1,0,\ldots,0), \ldots$
for the canonical basis vectors (of arbitrary length) and
$e = (1, \ldots, 1)$ for the all ones vector (again, of arbitrary
length). Moreover, we will write $\ID[q]$ for the $q\times q$ identity
matrix. We will denote the symmetric group over $\{1,\ldots,q\}$ by
$S_q$. For $\tau \in S_q$, we use $M_\tau$ for the $q\times q$
permutation matrix with $M_\tau e_{\tau(i)} = e_{i}$ for all
$i=1,\ldots,q$. Further, for $\tau \in S_q$ we define
$\hat{M}_\tau = \diag(1, M_\tau) \in \MatGr[\RR]{q+1}$.

It will be convenient to collect the variables of~\eqref{eq:BP} in an
$(m+1)\times n$ matrix
\begin{equation*}
  V = \begin{pmatrix}
    y_1 & \cdots & y_n \\
    \cmidrule(lr){1-3}
    x_{11} & \cdots & x_{1n} \\
    \vdots &        & \vdots \\
    x_{m1} & \cdots & x_{mn} \\
  \end{pmatrix}.
\end{equation*}
This allows us to represent the objective function as well as the
conditions using Kronecker products and the vectorisation
function. (See, \eg, \cite[Sect.~11.4]{HBofLA} for the notation and
its basic properties.) This allows us to write the objective function
as
\begin{math}
  (e \otimes e_1) \vec(V).
\end{math}
Similarly, with $u = (B, -s_1, \ldots, -s_m)$ we are able to write the
constraints as
\begin{math}
  (\ID[n] \otimes u) \vec(V) \geq 0
\end{math}
and
\begin{math}
  (e \otimes (0\,\ID[m])) \vec(V) = e.
\end{math}
(Here, $(0\, \ID)$ denotes a column of zeroes followed an identity
matrix.)

In the experiments in \autoref{sec:MainResult} we  consider two kinds of permutation matrices. For the first kind, consider any
permutation $\tau \in S_n$. Then
\begin{multline*}
  (e \otimes e_1) (M_\tau \otimes \ID[m+1]) \vec(V)
  =
  (e M_\tau \otimes e_1) \vec(V)
  \\
  =
  (e \otimes e_1) \vec(V).
\end{multline*}
Thus, applying the permutation matrix $M_\tau \otimes \ID[m+1]$ to the
variables vector $\vec{V}$ will not change the objective
function. Similarly,
\begin{math}
  (e\otimes (0\,\ID[m])) (M_\tau \otimes \ID[m+1])
  =
  (e\otimes (0\,\ID[m]))
\end{math}
which means that $(M_\tau \otimes \ID[m+1])$ also leaves the second
condition unchanged. Finally,
\begin{multline*}
  (\ID[n] \otimes u) (M_\tau \otimes \ID[m+1]) \vec(V) \\
  =
  M_\tau (\ID[n] \otimes u) \vec(V) \geq M_\tau 0 = 0
\end{multline*}
is equivalent to the first condition since $M_\tau$ only changes the
order of the inequalities. Thus, we can conclude that the matrices
$M_\tau \otimes \ID[m+1]$ are members of the symmetry group
of~\eqref{eq:BP}.

For the second kind of permutations, let us make the assumption that
the sizes are sorted in ascending order; \ie, that
$s_1 \leq \ldots \leq s_m$. Consider indices
$1 = i_1 < \ldots < i_{\ell + 1} = m + 1$ such that
$s_{i_j} = \ldots = s_{i_{j+1} - 1}$ for all $j = 0, \ldots, \ell$ and
$s_{i_j - 1} < s_{i_j}$ for $j=2, \ldots, \ell$. We will refer to
$i_1, \ldots, i_{\ell + 1}$ as the \emph{size boundaries} for the
given instance of the bin packing problem. Let us write
$S_{i_1,\ldots,i_{\ell+1}}$ for the subgroup of permutations
$\sigma \in S_m$ such that
\begin{math}
  \sigma(\{i_j, \ldots, i_{j+1} - 1\}) = \{i_j, \ldots, i_{j+1} - 1\}
\end{math}
for all $j=1, \ldots, \ell$. (That is, we have $s_j = s_{\sigma(j)}$
for any $\sigma \in S_{i_1,\ldots,i_{\ell+1}}$ and any
$j=1,\ldots,m$.)  For such a $\sigma \in S_{i_1,\ldots,i_{\ell+1}}$
consider the permutation matrix $(\ID[n] \otimes \hat{M}_\sigma)$. We
have
\begin{math}
  (e \otimes e_1) (\ID[n] \otimes \hat{M}_\sigma)
  =
  (e \otimes e_1)
\end{math}
which shows that $(\ID[n] \otimes \hat{M}_\sigma)$ leaves the
objective function invariant. For the first condition, we get
\begin{math}
  (\ID[n] \otimes u) (\ID[n] \otimes \hat{M}_\sigma)
  =
  (\ID[n] \otimes u)
\end{math}
since $u \hat{M}_\sigma = u$ by the assumption that
$s_j = s_{\sigma(j)}$ for all $j=1,\ldots,m$. Finally,
\begin{math}
  (e \otimes (0\,\ID[m])) (\ID[n] \otimes \hat{M}_\sigma)
  =
  M_\sigma (e \otimes (0\,\ID[m]))
\end{math}
which together with $M_\sigma e = e$ shows that also the second
condition is invariant under $(\ID[n] \otimes \hat{M}_\sigma)$. Thus,
also matrices of this kind are members of the symmetry group
of~\eqref{eq:BP}.
\subsection{Symmetry Breakers Generation}

For the near half-capacity instance families
in~\autoref{tab:near-half-capacity}, we generate linear and quadratic
symmetry breakers by instantiating a set of templates (see
\autoref{tab:symbreakers}) and varying their sizes via four subclasses
that control the number of variables and sampled permutations. For the
templates we use a shorthand notation where each $x$ stands for a
random linear polynomial containing $x_{i,k}$, $i,k=1,\ldots,n$, while
each $y$ stands for a random linear polynomial in $y_1, \ldots,
y_n$. The coefficients of these random polynomials are restricted $0$
and $1$.  Moreover, any $x^2$ or $y^2$ in the templates refers to
products of two different linear polynomials.


\begin{table}[h!]
\centering
\small
\setlength{\tabcolsep}{6pt}
\renewcommand{\arraystretch}{1.08}
\caption{Symmetry breaker templates and sizes used in our experiments.} 
\label{tab:symbreakers}

\begin{tabular}{lr}
  \toprule
  \textbf{Category} & \textbf{Templates} \\
  \midrule
  Linear         & $x$, $y$, $x + y$ \\
  Quadratic      & $x^2$, $y^2$, $x\,y$, $x^2 + y^2$ \\
  Mixed          & $x + y^2$, $x^2 + y$ \\
  \bottomrule
\end{tabular}

\begin{tabular}{@{}l r r@{}}
\toprule
\textbf{Size} & \textbf{\#vars} & \textbf{\#perms} \\
\midrule
Few variables, pew permutations  & $\approx 10$   & $50$  \\
Few variables, many permutations & $\approx 10$   & $500$ \\
Many variables, few permutations & $\approx 1000$ & $50$  \\
Numerous variables, few permutations & $\approx 4000$ & $50$  \\
\bottomrule
\end{tabular}
\end{table}






Note that the numbers of variables differ slightly between the
templates. For instance, for the $x^2$, $y^2$, and $x y$ templates the
``few variables'' case use $9$~variables (\ie, we form products of
linear polynomials with $3$-variables). For the $x^2 + y^2$ template
we use $18$ variables; $x + y^2$ and $x^2 + y$ use $16$ variables; and
$x$, $y$, and $x + y$ use $10$ variables. Similarly, the exact number
of variables in the ``many variables'' and ``numerous variables''
cases does vary as well.

For each shape and each size, we generate a random base polynomial $h$
and then we apply $50$ (or $500$) permutations $P$ from the symmetry
group in order to obtain symmetry breakers $h(P x) - h(x) \leq 0$. For
the $x^2 + y$ and $x + y^2$ shapes, we keep only those breakers which
include quadratic terms. (Without this filter, the quadratic terms
might cancel and we could end up with collection of linear
polynomials. However, we want to test linear symmetry breakers
separately.)

There are $n!$ permutations of the type $M_\tau \otimes \ID[m+1]$ with
$\tau \in S_n$. For the type $\ID[n] \otimes \hat{M}_\sigma$ with
$\sigma \in S_{i_1, \ldots, i_{\ell+1}}$ there are
\begin{math}
  \prod_{j=1}^{\ell} (i_{j+1} - i_j)
\end{math}
many options since
\begin{math}
  S_{i_1, \ldots, i_{\ell+1}} \cong \bigtimes_{j=1}^{\ell} S_{i_{j+1} - i_j}.
\end{math}
Since both types of permutations commute that gives us a total of
\begin{math}
  n!\, \prod_{j=1}^{\ell} (i_{j+1} - i_j)
\end{math}
many permutations in $G$. For our experiments, this is an
impractically large number. For instance, for the 3~classes example
with $m = n = 2000$ and size boundaries $0, 688, 1320, 2000$ we have
more than $10^{10\,520}$ permutations.

Hence, in order to generate permutations, we will follow a different
strategy. The symmetry group is generated by all (transposition)
matrices of the form
\begin{math}
  M_{(1\,k)} \otimes \ID[m+1]
\end{math}
with $k = 2, \ldots, n$ and all matrices of the form
\begin{math}
  \ID[n] \otimes \hat{M}_{(i_j\,k)}
\end{math}
where $j = 1, \ldots, \ell$ and $k = i_j + 1, \ldots, i_{j+1} -
1$. For each base polynomial $h$, we simply pick $50$ of these
generators at random (with replacement) and form their product. This
is then applied to the base polynomial~$h$.

\section{Experimental Results}\label{sec:ExperimentalResults}

\subsection{Experimental setup} We evaluated our framework on symmetry breakers generation on the near half-capacity bin-packing families (see \autoref{tab:near-half-capacity}). All experiments were run with Gurobi Optimizer 12.0.3 \cite{gurobi} (academic license) on an Apple M2 Pro (10 cores). 
\autoref{tab:GurobiSbs} summarizes the four Gurobi configurations evaluated in our experiments. All runs enable \texttt{nonconvex} quadratic handling. The \emph{Baseline} configuration disables both \texttt{Presolve} and \texttt{Symmetry} handling, serving as a reference point\footnote{According to the online documentation, \texttt{Presolve} simplifies the model. The built-in symmetry breaking methods can interfere with our symmetry breakers.}. The \emph{Default} configuration uses Gurobi automatic settings for both \texttt{Presolve} and \texttt{Symmetry} processing. To study the impact of symmetry breaking strength in Gurobi, we keep \texttt{Presolve} on automatic and vary \texttt{Symmetry}: \emph{Conservative} applies a mild symmetry strategy, whereas \emph{Aggressive} applies the strongest symmetry handling considered. 
\begin{table}[h!]
\small
  \centering
  \caption{Gurobi configurations used in our experiments}
  \label{tab:GurobiSbs}
  \begin{tabular}{lcccc}
    \toprule
    \textbf{\!\!\!Parameter\!\!}\! & \textbf{Baseline} & \textbf{Default} & \textbf{Conservative} & \textbf{\!Aggressive\!\!\!} \\
    \midrule
    \texttt{\!\!\!Nonconvex\!\!\!} & 2 & 2 & 2 & \!2\!\!\! \\
    \texttt{\!\!\!Presolve\!\!\!}  & 0 & -1 & -1 & \!-1\!\!\! \\
    \texttt{\!\!\!Symmetry\!\!\!}  & 0 & -1 & 1 & \!2\!\!\! \\
    \bottomrule
  \end{tabular}
\end{table}
Note than even with the two options above disabled, Gurobi still has enabled a series of options (e.g. cuts, heuristics) which are used before applying the search method in the tree.

We report Gurobi work units (with a deadline of $1800$), a built-in statistic that quantifies the computational effort spent on optimization and provides a deterministic alternative to wall-clock runtime\footnote{Gurobi defines work units as a deterministic measure of computational effort expended by the solver, intended to be less sensitive than runtime to machine load and hardware differences.}.

For the non-linear symmetry breakers, SMT solvers are not applicable in this setting due to their lack of support for non-linear constraints. We therefore focus on MIP solving in this first study. Investigating constraint programming approaches (e.g., CPLEX \cite{manual1987ibm}) that can handle non-linear constraints is left for future work.

Benchmark instances, generated symmetry breakers, and scripts to reproduce and independently verify the experiments are made publicly available in our GitHub repository~\cite{github}.

\subsection{Symmetry breaker construction used in the experiments} We instantiate our symmetry breaker generation method using the templates in \autoref{tab:symbreakers} and vary the size via four subclasses controlling the number of variables and sampled permutations. 

For generating the symmetry breakers, we initially used the SymPy library in Python \cite{meurer2017sympy}. However, for the input sizes of our experiments and the number of different test cases that we use, the Python code was found to be too slow. Instead, we implemented the method in the compiled general purpose programming language OCaml~\cite{ocaml}. Given the size boundaries for an instance of the bin packing problem, our custom program can generate hundreds of symmetry breaker families in between one to three minutes. The implementation can be found in the GitHub repository \cite{github}.

\subsection{Interpretation of the results}\label{sec:InterpretationOfTheResults}
We evaluated a total of $1440$ problem instances, obtained by combining multiple benchmark classes, breaker templates and sizes all on top of the Baseline configuration. The overall number of timeouts was very low: only $10$ runs (approximately $0.7\%$) reached the time limit. Of these, the vast majority occurred in settings with linear symmetry breakers, while only a single timeout was observed in a quadratic configuration. 
\begin{figure*}[h!]
   \centering
   \input{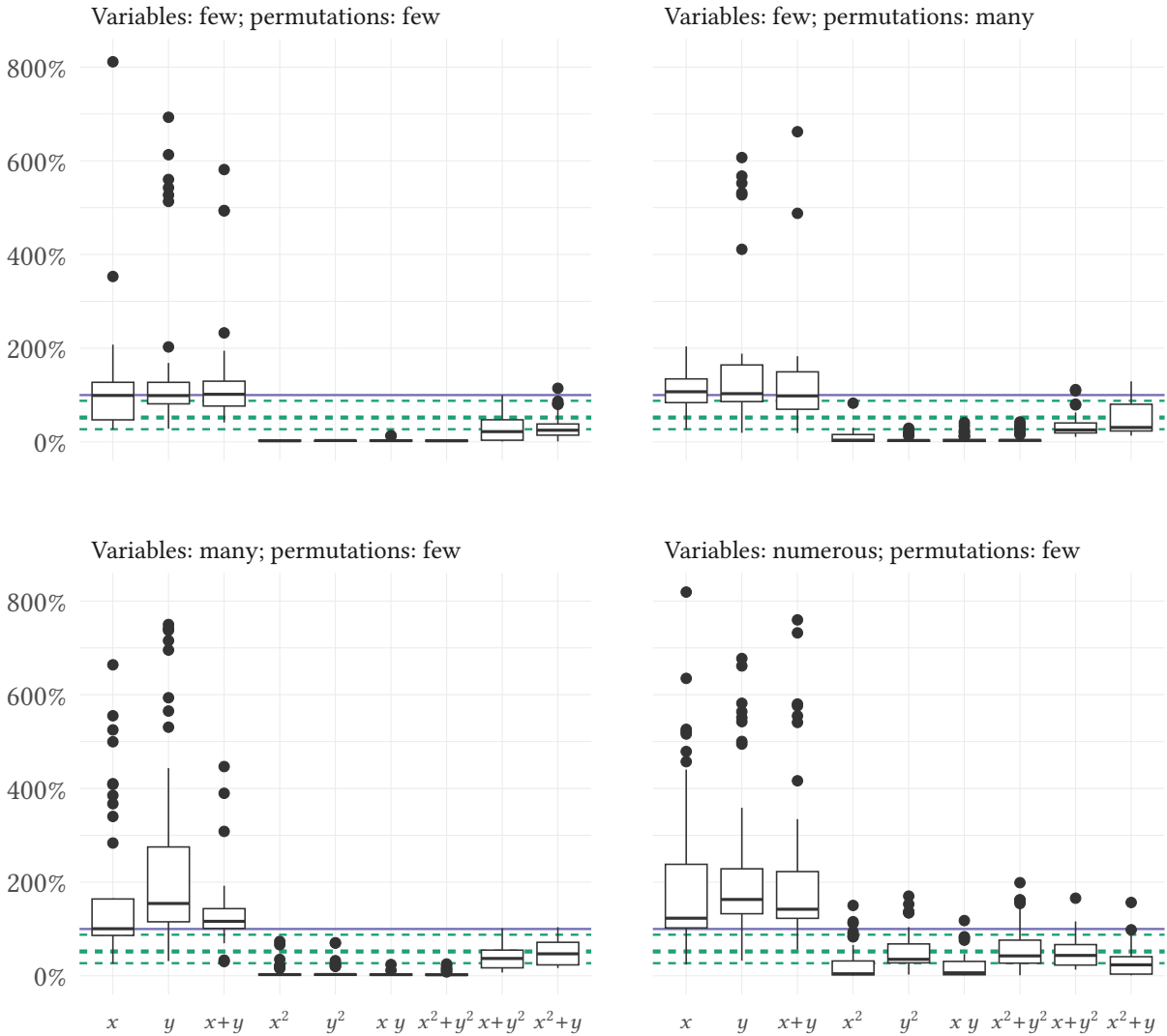}
   \caption{Combined results for all test cases (3, 5, 7, or 9 classes). The plot summarises the
     work units for each breaker shape and for each tested
     combinations of number of variables and permutations. The work
     units are scaled relative to the work units of the Baseline
     problem (\ie, the version without any breakers and with the same
     number of classes). The Baseline is shown as a solid purple line, and the relative timings for Gurobi's Default setting for each of the test cases are shown as dashed green lines.}
   \label{fig:CombinedClasses}
 \end{figure*}

\autoref{fig:CombinedClasses} aggregates the experimental results across all instance classes and summarizes the relative Gurobi work units for each symmetry-breaker template and scale. All values are normalized with respect to the Baseline setting without additional symmetry breakers, indicated by the solid purple line at 100\%. We also report the relative timings for Gurobi's Default setting for each of the classes. Those are shown as dashed green lines. The Conservative and Aggressive configurations did not complete within the allotted work units and are therefore omitted from the figure.

Several clear trends emerge. First, quadratic breakers consistently outperform linear ones. In addition, they also surpass Gurobi's built-in symmetry handling. In particular, the small-scale (\ie, few variables, few permutations) $xy$ template is superior to both, the linear breakers and Gurobi's built-in symmetry handling.

Second, small-scale symmetry breaker sets, both linear and quadratic, are the most reliable and consistently beneficial across all templates.

Third, purely linear breakers tend to be less efficient and often increase solver effort, especially when the number of variables or permutations grows. This behavior is consistent across all four size settings.

Fourth, as the scale of the symmetry breakers increases, either by involving many variables or by sampling many permutations, the performance becomes more variable. While some quadratic templates still achieve improvements, large breaker sets can also slow down the solver, indicating a trade-off between symmetry reduction and the overhead introduced by additional non-linear constraints. This effect is most pronounced in the "many variables" and "numerous variables" settings.

\section{Conclusions and Future Work}\label{sec:ConclusionsFutureWork}
We presented a static symmetry breaking approach that automatically
generates constraints. As input we only need to know (a subgroup of)
the symmetry group of the problem. We then generate a random
polynomial~$h(x)$ in the decision variables using a fixed template
and apply random permutations $P$ from the symmetry group in order to
find a family of symmetry breakering inequalities of the form
$h(P x) - h(x) \leq 0$. Since the approach is based on simple
algebraic manipulations, it can easily be carried out with any
symbolic computation software. Unlike existing methods, our technique
naturally yields non-linear symmetry breakers and introduces
non-linear interactions rather than re-encoding existing linear
schemes.  On near half-capacity bin packing, statically generated
quadratic breakers consistently improve over the baseline; the
smallest breaker size is the most reliable, and the mixed bilinear
cross-term template is the most efficient, while larger breaker sets
are more variable and can sometimes slow the solver.

As future work, we plan the followings.
\begin{enumerate*}[label=(\arabic*)]
\item Improve the permutation generation algorithm. At the moment, the
  permutations which we apply are completely random and they often do
  not change the polynomial~$h(x)$. That means that many of the
  generated inequalities are trivial and need to be discarded. Out of
  fifty attempts we would sometimes find fewer than ten non-trivial
  inequalities. This could be improved by restricting the random
  generation to permutations which do not leave the variables that
  occur in $h(x)$ fixed.
\item Evaluate the approach also to third degree symmetry breakers. This is a programming exercise as we need to generate the symmetry breakers in the format of expression trees \cite{Preiss1998ExpressionTrees} which is the way Gurobi expresses cubic constraints. 
\item Evaluate our method on arbitrary functions (\eg, algebraic or
  transcendental functions). 
\item Evaluate constraint programming solvers such as IBM ILOG CPLEX Optimization Studio \cite{manual1987ibm}.
\item We also plan to extend our symmetry breaking generation approach to scenarios in which bin items are not independent. Such cases commonly arise in the process of deployment in the Cloud of component-based applications and microservice systems, where deployment requires respecting inter-component constraints.
\end{enumerate*}

\bibliographystyle{ACM-Reference-Format}
\bibliography{symmetry}

@article{Walsh2012SymmetryBreakingConstraints,
  author  = {Toby Walsh},
  title   = {Symmetry Breaking Constraints: Recent Results},
  journal = {arXiv preprint arXiv:1204.3348},
  year    = {2012}
}

@article{kaibel2008packing,
  title={Packing and Partitioning Orbitopes},
  author={Kaibel, Volker and Pfetsch, Marc},
  journal={Mathematical Programming},
  volume={114},
  number={1},
  pages={1--36},
  year={2008},
  publisher={Springer}
}

@article{Margot2002PruningIsomorphism,
  author  = {Fran{\c{c}}ois Margot},
  title   = {Pruning by Isomorphism in Branch-and-Cut},
  journal = {Mathematical Programming},
  volume  = {94},
  number  = {1},
  pages   = {71--90},
  year    = {2002},
  doi     = {10.1007/s10107-002-0358-2}
}

@article{LibertiOstrowski2014StabilizerBased,
  author  = {Leo Liberti and James Ostrowski},
  title   = {Stabilizer-Based Symmetry Breaking Constraints for Mathematical Programs},
  journal = {Journal of Global Optimization},
  volume  = {60},
  pages   = {183--194},
  year    = {2014},
  doi     = {10.1007/s10898-013-0106-6}
}

@inproceedings{salvagnin2018symmetry,
  title={Symmetry Breaking Inequalities from the Schreier-Sims Table},
  author={Salvagnin, Domenico},
  booktitle={International Conference on the Integration of Constraint Programming, Artificial Intelligence, and Operations Research},
  pages={521--529},
  year={2018},
  organization={Springer}
}

@inproceedings{10.5555/3006433.3006560,
author = {Gent, Ian P. and Smith, Barbara M.},
title = {Symmetry Breaking in Constraint Programming},
year = {2000},
publisher = {IOS Press},
address = {NLD},
booktitle = {Proceedings of the 14th European Conference on Artificial Intelligence},
pages = {599–603},
numpages = {5},
location = {Berlin, Germany},
series = {ECAI'00}
}

@inproceedings{fahle2001symmetry,
  title={Symmetry Breaking},
  author={Fahle, Torsten and Schamberger, Stefan and Sellmann, Meinolf},
  booktitle={International Conference on Principles and Practice of Constraint Programming},
  pages={93--107},
  year={2001},
  organization={Springer}
}

@inproceedings{hubert2024preserving,
  title={Preserving and Exploiting Symmetry in Algebraic Computation},
  author={Hubert, Evelyne},
  booktitle={Proceedings of the 2024 International Symposium on Symbolic and Algebraic Computation},
  pages={13--13},
  year={2024}
}

@article{DingliwalEtAlArXiv2019SMTSymBreak,
  author    = {Saket Dingliwal and Ronak Agarwal and Happy Mittal and Parag Singla},
  title     = {Advances in Symmetry Breaking for SAT Modulo Theories},
  journal   = {arXiv preprint arXiv:1908.00860},
  year      = {2019},
  url       = {https://arxiv.org/abs/1908.00860}
}

@article{muir2024temporal,
  title={Temporal Bin Packing with Half-Capacity Jobs},
  author={Muir, Christopher and Marshall, Luke and Toriello, Alejandro},
  journal={INFORMS Journal on Optimization},
  volume={6},
  number={1},
  pages={46--62},
  year={2024},
  publisher={INFORMS}
}

@incollection{GentPetriePuget2006SymmetryCP,
  author    = {Gent, Ian P. and Petrie, Karen E. and Puget, Jean-Fran{\c{c}}ois},
  title     = {Symmetry in Constraint Programming},
  booktitle = {Handbook of Constraint Programming},
  editor    = {Rossi, Francesca and van Beek, Peter and Walsh, Toby},
  publisher = {Elsevier},
  year      = {2006},
  pages     = {329--376}
}

@inproceedings{CrawfordEtAlKR1996,
  author    = {Crawford, James M. and Ginsberg, Matthew L. and Luks, Eugene M. and Roy, Amitabha},
  title     = {Symmetry-Breaking Predicates for Search Problems},
  booktitle = {Proceedings of the Fifth International Conference on Principles of Knowledge Representation and Reasoning (KR'96)},
  publisher = {Morgan Kaufmann},
  year      = {1996},
  pages     = {148--159}
}

@inproceedings{AloulMarkovSakallahDAC2003Shatter,
  author    = {Aloul, Fadi A. and Markov, Igor L. and Sakallah, Karem A.},
  title     = {Shatter: Efficient Symmetry-Breaking for Boolean Satisfiability},
  booktitle = {Proceedings of the 40th Design Automation Conference (DAC)},
  publisher = {ACM},
  year      = {2003},
  pages     = {836--839},
  doi       = {10.1145/775832.776028}
}

@inproceedings{backofen1999excluding,
  title={Excluding Symmetries in Constraint-based Search},
  author={Backofen, Rolf and Will, Sebastian},
  booktitle={International Conference on Principles and Practice of Constraint Programming},
  pages={73--87},
  year={1999},
  organization={Springer}
}

@Book{HBofLA,
  editor =       {Hogben, Leslie},
  title =        {Handbook of Linear Algebra},
  publisher =    {CRC Press},
  year =         2014,
  series =       {Discrete Mathematics and Its Applications},
  address =      {Boca Raton},
  edition =      {second},
  doi =          {10.1201/b16113}}

@Book{GroveBenson1985,
  author =       {Grove, Larry C. AND Benson, Clark T.},
  title =        {Finite Reflection Groups},
  publisher =    {Springer Science+Business Media},
  year =         1985,
  number =       99,
  series =       {Graduate Texts in Mathematics},
  edition =      {2nd},
  doi =          {10.1007/978-1-4757-1869-0}
}

@InCollection{50Years,
  author =       {Margot, Fran{\c c}ois},
  title =        {Symmetry in Integer Linear Programming},
  booktitle =    {50 Years of Integer Programming 1958--2008},
  publisher = {Springer},
  year =         2010,
  editor =       {J{\"u}nger, Michael AND Liebling, Thomas AND Naddef,
                  Denis AND Nemhauser, George AND Pulleyblank, William          
                  AND Reinelt, Gerhard AND Rinaldi, Giovanni AND
                  Wolsey, Laurence},
  chapter =      17,
  pages =        {647--686},
  address =      {Heidelberg Dordrecht London New York},
  doi =          {10.1007/978-3-540-68279-0}}

@inproceedings{puget1993satisfiability,
  title={On the Satisfiability of Symmetrical Constrained Satisfaction Problems},
  author={Puget, Jean-Francois},
  booktitle={International Symposium on Methodologies for Intelligent Systems},
  pages={350--361},
  year={1993},
  organization={Springer}
}

@article{faenza2009extended,
  title={Extended Formulations for Packing and Partitioning Orbitopes},
  author={Faenza, Yuri and Kaibel, Volker},
  journal={Mathematics of Operations Research},
  volume={34},
  number={3},
  pages={686--697},
  year={2009},
  publisher={INFORMS}
}

@book{martello1990knapsack,
  title={Knapsack Problems: Algorithms and Computer Implementations},
  author={Martello, Silvano and Toth, Paolo},
  year={1990},
  publisher={John Wiley \& Sons, Inc.}
}

@article{meurer2017sympy,
  title={SymPy: Symbolic Computing in Python},
  author={Meurer, Aaron and Smith, Christopher P and Paprocki, Mateusz and {\v{C}}ert{\'\i}k, Ond{\v{r}}ej and Kirpichev, Sergey B and Rocklin, Matthew and Kumar, AMiT and Ivanov, Sergiu and Moore, Jason K and Singh, Sartaj and others},
  journal={PeerJ Computer Science},
  volume={3},
  pages={e103},
  year={2017},
  publisher={PeerJ Inc.}
}

@manual{ocaml,
  title        = {The OCaml System},
  author       = {Xavier Leroy and Damien Doligez and Alain Frisch and Jacques Garrigue and Didier Rémy and Jérôme Vouillon},
  year         = 2025,
  note         = {Available at \url{https://ocaml.org/manual/5.4/index.html}; Accessed on January 16th 2026},
organization = {Institut National de Recherche en Informatique et en Automatique},
  edition      = {release 5.4}
}

@misc{gurobi,
  author = {{Gurobi Optimization, LLC}},
  title = {{Gurobi Optimizer Reference Manual}},
  year = 2024,
  url = "https://www.gurobi.com"
}

@article{manual1987ibm,
  title={IBM ILOG CPLEX Optimization Studio},
  author={Manual, CPLEX User’s},
  journal={Version},
  volume={12},
  number={1987-2018},
  pages={1},
  year={1987}
}

@misc{Preiss1998ExpressionTrees,
  author       = {Preiss, Bruno R.},
  title        = {Expression Trees},
  year         = {1998},
  note         = {Accessed on January 30, 2026; \url{https://web.archive.org/web/20170119094603/http://www.brpreiss.com/books/opus5/html/page264.html}}
}

@misc{github,
  author = {Era\c{s}cu, M\u{a}d\u{a}lina AND Middeke, Johannes},
  title = {Polynomial Symmetry Breakers},
  year = 2026,
  url = "https://github.com/merascu/PolynomialSymmetryBreakers"
}

\appendix
\section{Detailed Per-Class Experimental Results}\label{apx:DetailedPerClassExperimentalResults}
 This appendix provides a fine-grained presentation of the experimental results. The figures report the same data as \autoref{fig:CombinedClasses}; however, the results are presented separately for each individual class. The plots corroborate the findings discussed in the main body of the paper.
 
\begin{figure*}[b!]
  \centering
  \input{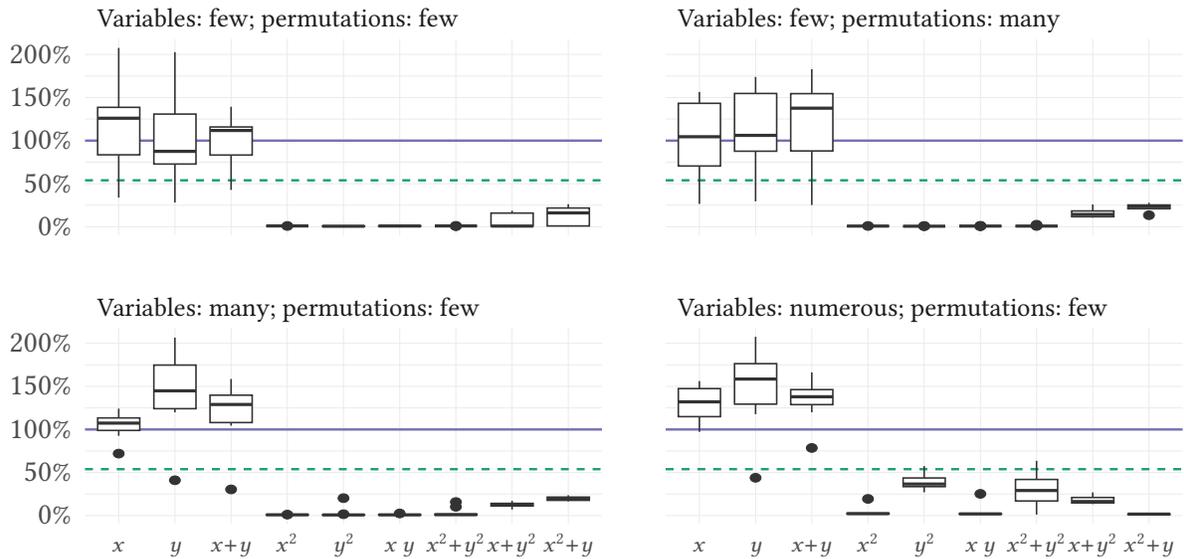}
  \caption{Results for the three classes case only.}
  \label{fig:plot3classes}
\end{figure*}

\begin{figure*}[b!]
  \centering
  \input{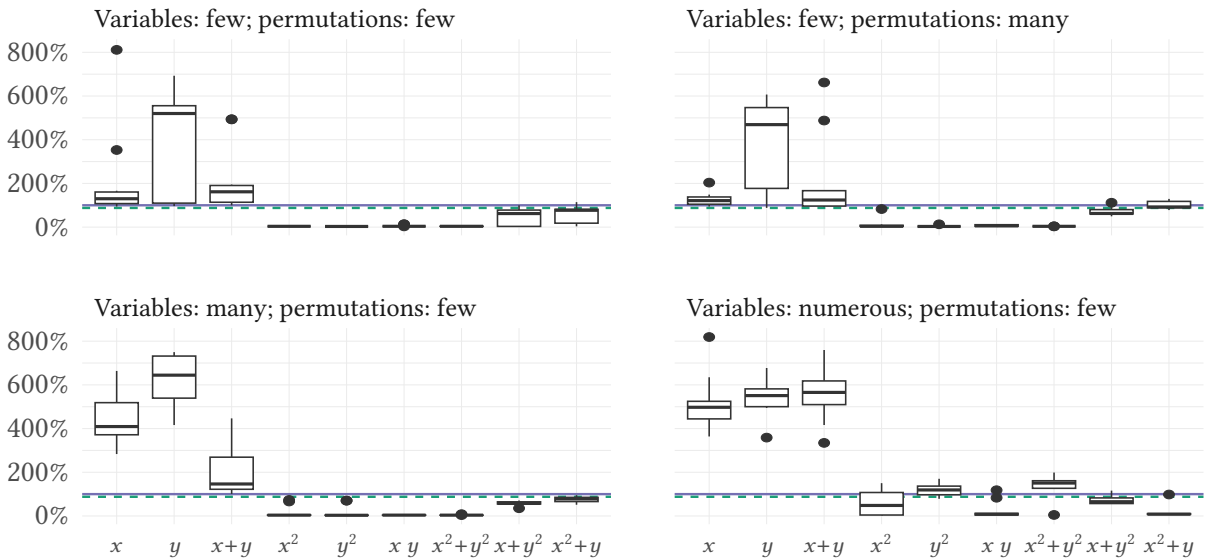}
  \caption{Results for the five classes case only.}
  \label{fig:plot5classes}
\end{figure*}

\begin{figure*}[h!]
  \centering
  \input{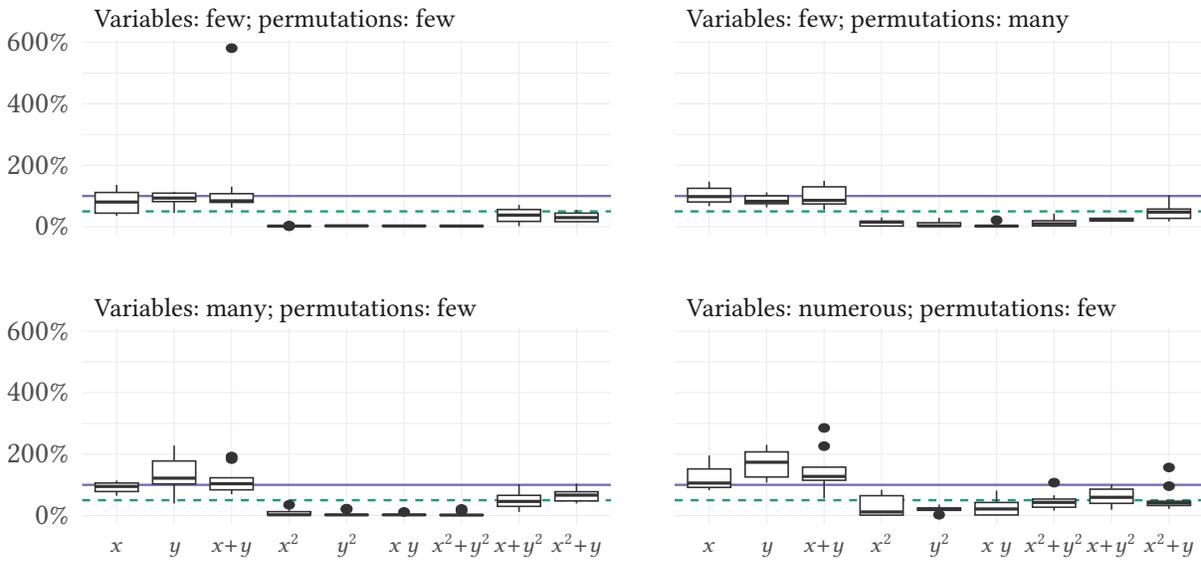}
  \caption{Results for the seven classes case only.}
  \label{fig:plot7classes}
\end{figure*}

\begin{figure*}[h!]
  \centering
  \input{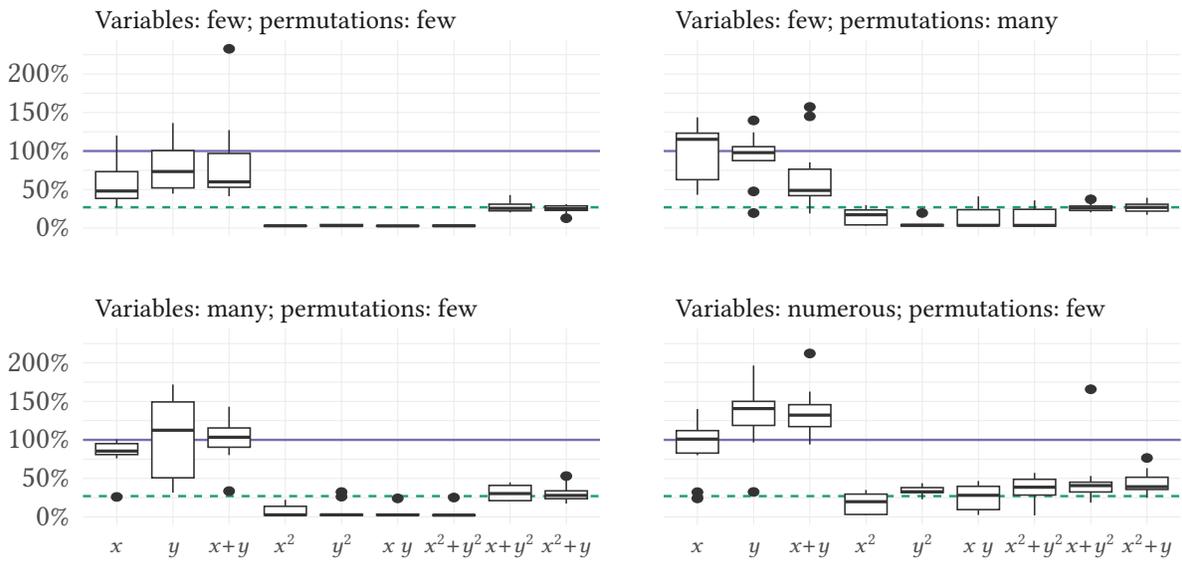}
  \caption{Results for the nine classes case only.}
  \label{fig:plot9classes}
\end{figure*}

\end{document}